\newcommand{\midd}{\mathbin{:}}
\newcommand\eat[1]{}
	\newcommand{\pref}{\succsim\xspace}
	\newcommand{\Pref}[1][]{
		\ifthenelse{\equal{#1}{}}{\mathrel \succsim}{\mathop{R_{#1}}}
	}    
				\newcommand{\spref}{\ensuremath{\succ}}                                      
	\newcommand{\sPref}[1][]{                  
		\ifthenelse{\equal{#1}{}}{\mathrel \succ}{\mathop{P_{#1}}}
	}                                          
	\newcommand{\Indiff}[1][]{                 
		\ifthenelse{\equal{#1}{}}{\mathrel \sim}{\mathop{\sim_{#1}}}
	}
	\newcommand{\prefset}[1][]{\ifthenelse{\equal{#1}{}}{\mathcal{R}}{\mathcal{R}_{#1}}}
	\newcommand{\indiff}{\mathbin \sim\xspace}
	\newcommand{\bor}[1][]{\ifthenelse{\equal{#1}{}}{\mathit{BOR}}{\mathit{BOR}(#1)}}
	\newcommand{\egal}{\ensuremath{ESR}\xspace}
	\newcommand{\PS}{MR\xspace}
			\newcommand{\ml}[1][]{\ensuremath{\ifthenelse{\equal{#1}{}}{\mathit{ML}}{\mathit{ML}(#1)}}\xspace}
			\newcommand{\sml}[1][]{\ensuremath{\ifthenelse{\equal{#1}{}}{\mathit{SML}}{\mathit{SML}(#1)}}\xspace}
			\newcommand{\sd}[1][]{\ensuremath{\ifthenelse{\equal{#1}{}}{\mathit{SD}}{\mathit{SD}(#1)}}\xspace}
			\newcommand{\rsd}[1][]{\ensuremath{\ifthenelse{\equal{#1}{}}{\mathit{RSD}}{\mathit{RSD}(#1)}}\xspace}
			\newcommand{\rd}[1][]{\ensuremath{\ifthenelse{\equal{#1}{}}{\mathit{RD}}{\mathit{RD}(#1)}}\xspace}
			\newcommand{\st}[1][]{\ensuremath{\ifthenelse{\equal{#1}{}}{\mathit{ST}}{\mathit{ST}(#1)}}\xspace}
			\newcommand{\bd}[1][]{\ensuremath{\ifthenelse{\equal{#1}{}}{\mathit{BD}}{\mathit{BD}(#1)}}\xspace}
			\newcommand{\pc}[1][]{\ensuremath{\ifthenelse{\equal{#1}{}}{\mathit{PC}}{\mathit{PC}(#1)}}\xspace}
			\newcommand{\dl}[1][]{\ensuremath{\ifthenelse{\equal{#1}{}}{\mathit{DL}}{\mathit{DL}(#1)}}\xspace}
			\newcommand{\ul}[1][]{\ensuremath{\ifthenelse{\equal{#1}{}}{\mathit{UL}}{\mathit{UL}(#1)}}\xspace}
\newlength{\wordlength}
\title{Rank Maximal Equal Contribution: a Probabilistic Social Choice Function}
\author{Haris Aziz and Pang Luo}
	 \institute{%
	NICTA and University of New South Wales, Sydney, NSW 2033, Australia
		\email{haris.aziz@data61.csiro.au}\\
}
\author{Haris Aziz and Pang Luo and Christine Rizkallah}
	 \institute{%
	Data61, CSIRO and UNSW, Sydney, NSW 2033, Australia
		\email{\{haris.aziz,pang.luo\}@data61.csiro.au}\\
	\and
	University of Pennsylvania \\
		Philadelphia, United States\\
		\email{criz@seas.upenn.edu}
		}
\begin{document}

\maketitle

\begin{abstract}
When aggregating preferences of agents via voting, two desirable goals are to incentivize agents to participate  in the voting process and then identify outcomes that are Pareto efficient. We consider participation as formalized by Brandl, Brandt, and Hofbauer (2015) based on the stochastic dominance (SD) relation. 
We formulate a new rule called RMEC (Rank Maximal Equal Contribution) that satisfies the strongest notion of participation and is also ex post efficient. 
The rule is polynomial-time computable and also satisfies many other desirable fairness properties.
The rule suggests a general approach to achieving ex post efficiency and very strong participation. 
\end{abstract}

	\section{Introduction}


Making collective decisions is a fundamental issue in multi-agent systems. 
Two fundamental goals in collective decision making are (1) agents should be incentivized to participate and (2) the outcome should be such that there exists no other outcome that each agent prefers. We consider these goals of \emph{participation}~\citep{BrFi83a,Moul88b} and \emph{efficiency}~\citep{Moul03a} in the context of probabilistic social choice.

			In probabilistic social choice, we study \emph{probabilistic social choice functions (PSCFs)} which take as input agents' preferences over alternatives and return a lottery (probability distribution) over the alternatives.\footnote{PSCFs are also referred to as social decision schemes in the literature.} The lottery can also represent time-sharing arrangements or relative importance of alternatives~\citep{Aziz13b,BMS05a}. For example, agents may vote on the proportion of time different genres of songs are played on a radio channel. This type of preference aggregation is not captured by traditional deterministic voting in which the output is a single discrete alternative which may not be suitable to cater for different tastes.

			When defining notions such as participation, efficiency, and strategyproofness, one needs to reason about preferences over probability distributions (lotteries). In order to define these properties, we consider \emph{stochastic dominance} (\sd). A lottery is preferred over another lottery with respect to \sd, if for all utility functions consistent with the ordinal preferences, the former yields as much utility as the latter. 

			Although efficiency and strategyproofness with respect to \sd have been considered in a series of papers~\citep{Aziz13b,AzSt14a,ABBH12a,ABB13d,BMS05a,Cho12a,Gibb77a,Proc10a}, three notions of participation with respect to \sd  were formalized only recently by \citet{BBH15b}.
		The three notions include very strong (participating is strictly beneficial), strong (participating is at least as helpful as not participating) and standard (not participating is not more beneficial). In contrast to deterministic social choice in which the number of possible outcomes are at most the number of alternatives, probabilistic social choice admits infinitely many  outcomes which makes participation even more meaningful: agents may be able to perturb the outcome of the lottery slightly in their favour by participating in the voting process. In spirit of the radio channel example, voters should ideally be able to increase the fractional time of their favorite music genres by participating in the vote to decide the durations.  

			Two central results presented by \citet{BBH15b} were:
			 \begin{inparaenum}[(1)]
				\item there exists a PSCF (RSD---Random Serial Dictatorship) that satisfies very strong \sd-participation and ex post efficiency (Theorem 4, \citep{BBH15b});
				\item There exists a PSCF (uniform randomization over the Borda winners) that satisfies strong \sd-participation and \sd-efficiency (Theorem 7, \citep{BBH15b}).
			 \end{inparaenum}
In this paper, we propose a new rule  that like RSD satisfies very strong \sd-participation and ex post efficiency and has various desirable properties.

					\begin{table}[h!]
						\centering
					\label{tab:compare}
		\scalebox{1}{			\begin{tabular}{lcccccccc}
					\toprule
					&Serial dictator&\rsd&$\sml$&$BO$&$ESR$&$RMEC$\\
					Properties&&\\
						\midrule
					\sd-efficient&+&--&+&+&+&{--}\\
					ex post efficient&+&+&+&+&+&{+}\\
					\midrule
					Very strong \sd-participation&-&+&--&--&--&{+}\\
					Strong \sd-participation&+&+&--&+&--&{+}\\
					\sd-participation&+&+&+&+&+&{+}\\
					\midrule
					Anonymous &--&+&+&+&+&{+}\\
					Proportional share&--&+&--&--&--&{+}\\
					\midrule
					Strategyproof for dichotomous&+&+&+&--&--&+\\
										and strict preferences&&&&&&\\
					\midrule
					Polynomial-time computable&+&--&+&+&+&+\\
					\bottomrule
					\end{tabular}
					}
					\caption{A comparison of axiomatic properties of different PSCFs: \emph{RSD} (random serial dictatorship), \emph{SML} (strict maximal lotteries), $BO$ (uniform randomization over Borda winners), $ESR$ (egalitarian simultaneous reservation)  and $RMEC$ (Rank Maximal Equal Contribution).
					}
					\label{table:summary:egal}
					\end{table}

		\paragraph{Contributions}

Our central contribution is a new probabilistic voting rule called Rank Maximal Equal Contribution Rule (RMEC). RMEC
satisfies very strong SD-participation and ex post efficiency. 
Moreover RMEC is polynomial-time computable and also satisfies other important axioms such as anonymity, neutrality, fair share, and proportional share. Fair share property requires that each agent gets at least $1/n$ of the maximum possible utility. Proportional share is a stronger version of fair share.
Whereas RMEC is ex post efficient, it is not \sd-efficient.

On the other hand, RMEC has two key advantages over RSD the known rule that satisfies very strong \sd participation. Firstly, RMEC is polynomial-time computable\footnote{Unlike other desirable rules such as maximal lotteries~\citep{ABBH12a,Bran13a} and ESR~\citep{AzSt14a}, 
RMEC is relatively simple and does not require any linear programs to find the outcome lottery.} whereas computing the RSD probability shares is \#P-complete.
The computational tractability of RMEC is a significant advantage over RSD especially when PSCFs are used for time-sharing purposes where computing the time shares is important. 
For RSD, it is even open whether there exists an FPRAS (Fully 
Polynomial-time Approximation Scheme) for computing the outcome shares/probabilities. 
Secondly, RMEC is much more efficient in a welfare sense than RSD. In particular, RMEC dominates RSD in the following sense: for any profile on which RMEC is not \sd-efficient, RSD is not \sd-efficient as well.\footnote{This idea of comparing two mechanisms with respect to a property may be of independent interest. When two mechanisms $f$ and $g$ do not satisfy a property $\phi$ in general, one can still say that that $f$ dominates $g$ with respect to $\phi$ if for any instance on which $f$ does not satisfy $\phi$, $g$ does not satisfy it either. We prove that RMEC dominates RSD wrt \sd-efficiency.} In fact we show that for most preference profiles (for which arbitrary lotteries can be \sd-inefficient), RMEC almost always returns an \sd-efficient outcome. In fact, for 4 or less agents and 4 or less alternatives, all RMEC outcomes are \sd-efficient whereas this is not the case for RSD.

Our formulation of RMEC suggests a general computationally-efficient approach to achieving ex post efficiency and very strong \sd-participation. We identify MEC (Maximal Equal Contribution)---a general class of rules that all satisfy the properties satisfied by RMEC: single-valued, anonymity, neutrality, fair share, proportional share, ex post efficiency, very strong \sd-participation, and a natural monotonicity property. They are also strategyproof under strict and dichotomous preferences. 
%

A relative comparison of different probabilistic voting rules is summarized in Table~\ref{table:summary:egal}.


			 \section{Related Work}

			 One of the first formal works on probabilistic social choice is by \citet{Gibb77a}. 
			 The literature in probabilistic social choice has grown over the years 
			 although it is much less developed in comparison to deterministic social choice~\citep{Bran17a}. The main result of \citet{Gibb77a} was that random dictatorship in which each agent has uniform probability of choosing his most preferred alternative is the unique anonymous, strategyproof and ex post efficient PSCF. Random serial dictatorship (\rsd) is the natural generalization of random dictatorship for weak preferences but the \rsd lottery is \#P-complete to compute~\citep{ABB13b}.  RSD is defined by taking a permutation of the agents uniformly at random and then invoking serial dictatorship: each agent refines the working set of alternatives by picking his most preferred of the alternatives selected by the previous agents).

			 \citet{BoMo01a} initiated the use of stochastic dominance to consider various notions of strategyproofness, efficiency, and fairness conditions in the domain of \emph{random assignments} which is a special type of social choice setting. They proposed the 
			 probabilistic serial mechanism --- a desirable random assignment mechanism. 
			 \citet{Cho12a} extended the approach of \citet{BoMo01a} by considering other lottery extensions such as ones based on lexicographic preferences. 

	Participation has been studied in the context of deterministic voting rules in great detail. 
	\citet{BrFi83a} formalized the paradox of a voter having an incentive to not participate for certain voting rules. \citet{Moul88b} proved that Condorcet consistent voting rules are susceptible to a ``no show.'' 
	We point out that no  deterministic voting rule can satisfy very strong participation. Consider a voting setting with two agents and two alternatives $a$ and $b$. Agent 1 prefers $a$ over $b$ and agent 2 prefers $b$ over $a$. Then whatever the outcome of voting rule, one agent will get a least preferred outcome despite participating. The example further motivates the study of PSCFs with good participation incentives.


			 The tradeoff of efficiency and strategyproofness for PSCFs was formally considered in a series of papers~\citep{Aziz13b, AzSt14a, ABBH12a, ABB13d, BMS05a}.
			 \citet{AzSt14a} presented a generalization --- \emph{Egalitarian Simultaneous Reservation} (\egal) --- of the probabilistic serial mechanism to the domain of social choice.  \citet{Aziz13b} proposed the \emph{maximal recursive (\PS)} PSCF which is similar to the random serial dictatorship but for which the lottery can be computed in polynomial time.  

\citet{BBH15d} study the connection between welfare maximization and participation and show how welfare maximization achieves SD-participation. However the approach does not necessarily achieve very strong SD-participation or even strong SD-participation.
			 
			 In very recent work, \citet{GAX17a} presented an elegant rule called \emph{2-Agree} that satisfies very strong \sd-participation, ex post efficiency, and various other properties. However, the rule is defined for strict preferences.\footnote{Under strict preferences, random dictatorship satisfies all the properties examined in this paper.} 

			 		\section{Preliminaries}

			 			Consider the social choice setting in which there is a set of agents $N=\{1,\ldots, n\}$, a set of alternatives $A=\{a_1,\ldots, a_m\}$ and a preference profile $\pref=(\pref_1,\ldots,\pref_n)$ such that each $\pref_i$ is a complete and transitive relation over $A$. Let $\mathcal{R}$ denote the set of all possible weak orders over $A$ and let $\mathcal{R}^N$ denote all the possible preference profiles for agents in $N$.
						Let $\mathcal{F}(\mathbb{N})$ denote the set of all finite and non-empty subsets of $\mathbb{N}$.
			 			We write~$a \pref_i b$ to denote that agent~$i$ values alternative~$a$ at least as much as alternative~$b$ and use~$\spref_i$ for the strict part of~$\pref_i$, i.e.,~$a \spref_i b$ iff~$a \pref_i b$ but not~$b \pref_i a$. Finally, $\indiff_i$ denotes~$i$'s indifference relation, i.e., $a \indiff_i b$ if and only if both~$a \pref_i b$ and~$b \pref_i a$.
			 			The relation $\pref_i$ results in equivalence classes $E_i^1,E_i^2, \ldots, E_i^{k_i}$ for some $k_i$ such that $a\spref_i a'$ if and only if $a\in E_i^l$ and $a'\in E_i^{l'}$ for some $l<l'$. Often, we will use these equivalence classes to represent the preference relation of an agent as a preference list
			 			$i\midd E_i^1,E_i^2, \ldots, E_i^{k_i}$.
			 		For example, we will denote the preferences $a\indiff_i b\spref_i c$ by the list $i:\ \{a,b\}, \{c\}$. 
			 		For any set of alternatives $A'$, we will refer by $\max_{\pref_i}(A')$ to the set of most preferred alternatives according to preference $\pref_i$.

			 		An agent $i$'s preferences are \emph{dichotomous} if and only if he partitions the alternatives into at most two equivalence classes, i.e., $k_i\leq 2$. An agent $i$'s preferences are \emph{strict} if and only if $\pref_i$ is antisymmetric, i.e.
			 		 all equivalence classes have size 1.

			 			Let $\Delta(A)$ denote the set of all \emph{lotteries} (or \emph{probability distributions}) over $A$.
			 			The support of a lottery $p \in \Delta(A)$, denoted by $\text{supp}(p)$, is the set of all alternatives to which $p$ assigns a positive probability, i.e., $\text{supp}(p) = \{x \in A \mid p(x)>0\}$. We will write $p(a)$ for the probability of alternative $a$ and we will represent a lottery as
			 			$p_1a_1 + \cdots + p_ma_{m}$
			 			where $p_j=p(a_j)$ for $j\in \{1,\ldots, m\}$. For $A'\subseteq A$, we will (slightly abusing notation) denote $\sum_{a\in A'}p(a)$ by $p(A')$.

			 			A \emph{PSCF} is a function $f: \mathcal{R}^n \rightarrow \Delta(A)$. If $f$ yields a set rather than a single lottery, we call $f$ a \emph{correspondence}.
			 		Two minimal fairness conditions for PSCFs are \emph{anonymity} and \emph{neutrality}. Informally, they require that the PSCF should not depend on the names of the agents or alternatives respectively.

			 			In order to reason about the outcomes of PSCFs, we need to determine how agents compare lotteries. A \emph{lottery extension} extends preferences over alternatives to (possibly incomplete) preferences over lotteries.
			 			Given $\pref_i$ over $A$, a \emph{lottery extension} extends $\pref_i$ to preferences over the set of lotteries $\Delta(A)$. We now define \emph{stochastic dominance (SD)} which is the most established lottery extension.
				
						Under \emph{stochastic dominance (SD)}, an agent prefers a lottery that, for each alternative $x \in A$, has a higher probability of selecting an alternative that is at least as good as $x$. Formally, $p \pref_i^{\sd} q$ if and only if 
\[\forall y\in\nolinebreak A \colon \sum_{x \in A: x \pref_i y} p(x) \geq \sum_{x \in A: x \pref_i y} q(x).\]
\sd~\citep{BoMo01a} is particularly important because $p \pref^{\sd} q$ if and only if $p$ yields at least as much expected utility as $q$ for any von-Neumann-Morgenstern utility function consistent with the ordinal preferences \citep{Cho12a}.

					\paragraph{Efficiency}

					A lottery $p$ is \emph{${\sd}$-efficient} if and only if there exists no lottery $q$ such that $q \pref_i^{\sd} p$ for all $i\in N$ and $q \spref_i^{\sd} p$ for some $i\in N$. A PSCF is ${\sd}$-efficient if and only if it always returns an ${\sd}$-efficient lottery. A standard efficiency notion that cannot be phrased in terms of lottery extensions is \emph{ex post efficiency}. A lottery is ex post efficient if and only if it is a lottery over Pareto efficient alternatives. 

%


			 	\paragraph{Participation}

				
				\citet{BBH15b} formalized three notions of participation.

			 	%
			 	%
			 	%
			 	%
			 	%
			 	%
			 	%
			 	%
			 	%

		Formally, a PSCF $f$ satisfies \emph{$\sd$-participation} if there exists no $\pref \in \mathcal{R}^N$ for some $N\in\mathcal{F}(\mathbb{N})$ and $i\in N$ such that $f(\pref_{-i}) \mathrel{\succ_i^{\sd}} f(\pref)$. 
			 
			 A PSCF $f$ satisfies \emph{strong $\sd$-participation} if $f(\pref) \mathrel{\pref_i^{\sd}} f(\pref_{-i})$ for all $N\in\mathcal{F}(\mathbb{N})$, $\pref\in\mathcal{R}^N$, and $i\in N$. 
			 	
				A PSCF $f$ satisfies \emph{very strong $\sd$-participation} if  for all $N\in\mathcal{F}(\mathbb{N})$, $\pref\in\mathcal{R}^N$, and $i \in N$, $f(\pref)\mathrel{\pref_i^{\sd}} f(\pref_{-i})$ and 
			 	\[
			 	  f(\pref)\mathrel{\succ_i^{\sd}} f(\pref_{-i}) \text{ whenever }\exists p\in \Delta(A)\colon p\mathrel{\succ_i^{\sd}} f(\pref_{-i}).
			 	\]

	Very strong SD-participation is a desirable property because it gives an agent strictly more expected utility for \emph{each} utility function consistent with his ordinal preferences. We already pointed out that no deterministic voting rule can satisfy very strong \sd-participation. 

				 		 \paragraph{Strategyproofness}

				 		A PSCF~$f$ is ${\sd}$-\emph{ma\-nip\-u\-la\-ble} if and only if there exists an agent $i \in N$ and preference profiles $\pref$ and $\pref'$ with $\pref_j=\pref_j'$ for all $j \ne i$ such that $f(\pref') \spref_i^{\sd} f(\pref)$.
				 		A PSCF is \emph{weakly} ${\sd}$-\emph{strategyproof} if and only if it is not ${\sd}$-manipulable.
				 		It is ${\sd}$-\emph{strategyproof} if and only if $f(\pref) \pref_i^{\sd} f(\pref')$ for all $\pref$ and $\pref'$ with $\pref_j=\pref_j'$ for all $j \neq i$. Note that \sd-strategyproofness is equivalent to strategyproofness in the Gibbard sense. 

\section{Rank Maximal Equal Contribution}

We present Rank Maximal Equal Contribution (RMEC). The rule is based on the notion of rank maximality that is well-established in other contexts such as assignment~\citep{Mich07a,Feat11a}. 

For any alternative $a$, its \emph{corresponding rank vector} is $r(a)=(r_1(a), \ldots, r_m(a))$ where $r_j(a)$ is the number of agents who have $a$ as his $j$-th most preferred alternative.
For a lottery $p$, its \emph{corresponding rank vector} is $r(p)=(r_1(p), \ldots, r_m(p))$ where $r_j(p)$ is $\sum_{i\in N}\sum_{a\in E_i^j}p(a)$.
We compare rank vectors lexicographically. One rank vector $r=(r_1,\ldots, r_m)$ is \emph{better} than $r'=(r_1',\ldots, r_m')$ if for the smallest $i$ such that $r_i\neq r_i'$, it must hold that $r_i>r_i'$.

The notion of rank vectors leads to a natural PSCF: randomize over alternatives that have the best rank vectors. However such an approach does not even satisfy strong \sd-participation. It can also lead to perverse outcomes in which minority is not represented at all:
	Consider the following preference profile. 
	\begin{align*}
	1:&~a,b&\quad
	2:&~a,b&\quad
	3:&~b,a
	\end{align*}
	For the profile, the rank maximal rule simply selects $a$ with probability 1. This is unfair to agent 3 who is in a minority. Agent $3$ does not get any benefit of participating.

%

Let $F(i,A,\pref)$ be the set of most preferred alternatives of agent $i$ that have best rank vector among all his most preferred alternatives. In the RMEC rule, each agent $i\in N$ contributes $1/n$ probability weight to a subset of his most preferred alternatives. Precisely, he gives probability weight $\nicefrac{1}{n|F(i,A,\pref)|}$ to each alternative in $F(i,A,\pref)$. The resultant lottery $p$ is the RMEC outcome. We formalize the RMEC rule as Algorithm~\ref{algo:ec}.
	We view RMEC outcome lottery $p$ as consisting of $n$ components $p_1,\ldots, p_n$ where 
	\[p_i=\sum_{a\in F(i,A,\pref)}\frac{1}{n|F(i,A,\pref)|}a.\]



	\begin{algorithm}[h!]
		  \caption{The Rank Maximal Equal Contribution rule}
		  \label{algo:ec}
		\begin{algorithmic}
			\REQUIRE $(N,A,\pref)$
			\ENSURE lottery $p$ over $A$.
		\end{algorithmic}
		\begin{algorithmic}[1]
			\STATE Initialize probability $p(a)$ of each alternative $a\in A$ to zero.
		\FOR{$i=1$ to $|N|$}
		\STATE Identify $F(i,A,\pref)$ the subset of alternatives in $\max_{\pref_i}(A)$ with the best rank vector.
 		\FOR{each $a\in F(i,A,\pref)$}
 		\STATE $p(a)\longleftarrow p(a)+\nicefrac{1}{(n|F(i,A,\pref)|)}$
		\ENDFOR
		\COMMENT{we will denote by $p_i$ the probability weight of \\$1/n$ allocated by agent $i$ uniformly to alternatives\\ in $F(i,A,\pref)$}
		\ENDFOR
			\RETURN lottery $p$.
		\end{algorithmic}
	\end{algorithm}

 %
 %
 %
 %

	\begin{example}
		Consider the following preference profile. 
		\begin{align*}
			1:&~~ \{a,b,c,f\}, d, e&
			2:&~~ \{b,d\}, e, \{a,c,f\}\\
			3:&~~ \{a,e,f\}, d, b, c&
			4:&~~ c, d, e, \{a,f\}, b\\
			5:&~~ \{c, d\}, \{e, a, b,f\}
			\end{align*}
		
			The rank vectors of the alternatives are as follows:  $a: (2,1,1,1,0)$; $b: (2,1,1,0,1)$; $c: (3,0,1,1,0)$; $d: (2,3,0,0,0)$; $e: (1,2,2,0,0)$; and $f: (2,1,1,1,0)$.

		
			Each agent selects the most preferred alternatives with the best rank vector to give his $1/5$ probability uniformly to the following alternatives:
$1: c$;
$2: d$;
$3: a,f$;
$4: c$; and
$5: c$.

	%
			So the outcome is 
			$\frac{1}{10}a+\frac{3}{5}c+\frac{1}{5}d +\frac{1}{10}f.$
			

	\end{example}

	%
	%
	%
	%
	%

		\section{Properties of RMEC}


		We observe that RMEC is both anonymous and neutral. The RMEC outcome can be computed in time polynomial in the input size. Since the contribution to an alternative by an agent is $1/yn$ for some $y\in \{1,\ldots, m\}$, the probabilities are rational.
		
		

		%


		\begin{proposition}
			RMEC is anonymous and neutral. 
			The RMEC outcome can be computed in polynomial time $O(m^2n)$ and consists of rational probabilities.
			\end{proposition}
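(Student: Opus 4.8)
The plan is to verify the three claims in order, each of which follows fairly directly from the definition of the RMEC rule in Algorithm~\ref{algo:ec}.

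For \emph{anonymity}, I would observe that the algorithm treats agents only through the sets $F(i,A,\pref)$ and the aggregate rank vectors $r(a)$; the rank vector $r_j(a)$ counts agents having $a$ in their $j$-th equivalence class and is plainly invariant under any permutation of $N$, and the contribution $p_i = \sum_{a\in F(i,A,\pref)} \frac{1}{n|F(i,A,\pref)|}\,a$ depends on $i$ only via $\pref_i$. Hence permuting the agents permutes the $p_i$ but leaves their sum $p=\sum_i p_i$ unchanged. For \emph{neutrality}, I would note that relabelling the alternatives by a permutation $\sigma$ of $A$ carries each equivalence class $E_i^j$ to $\sigma(E_i^j)$, so $r_j(\sigma(a)) = r_j(a)$ and the lexicographic ``better than'' comparison is preserved; therefore $F(i,\sigma(A),\sigma\circ\pref) = \sigma(F(i,A,\pref))$, and the output lottery is exactly the $\sigma$-image of the original output. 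Both arguments are essentially bookkeeping; no subtlety is expected here.

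For the \emph{running time}, I would walk through the loop of Algorithm~\ref{algo:ec}. Building the rank vector of every alternative from the preference lists takes $O(mn)$, and comparing two rank vectors lexicographically takes $O(m)$. For each agent $i$, identifying $\max_{\pref_i}(A)$ is $O(m)$, and selecting from it the alternatives of best rank vector requires at most $O(m)$ lexicographic comparisons, i.e. $O(m^2)$; distributing the $1/n$ weight over $F(i,A,\pref)$ is $O(m)$. Summing over the $n$ agents gives $O(m^2 n)$, which dominates the $O(mn)$ preprocessing, yielding the stated bound $O(m^2 n)$.

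For \emph{rationality}, I would simply note that each agent $i$ contributes $\frac{1}{n|F(i,A,\pref)|}$ to each alternative in $F(i,A,\pref)$, where $|F(i,A,\pref)|$ is an integer in $\{1,\dots,m\}$; every summand in $p(a)=\sum_i p_i(a)$ is therefore of the form $\frac{1}{yn}$ with $y\in\{1,\dots,m\}$, so $p(a)$ is a finite sum of rationals and hence rational. The only mild care needed is to confirm $F(i,A,\pref)$ is always nonempty, which holds because $\max_{\pref_i}(A)$ is nonempty for any complete preference over a nonempty $A$ and the ``best rank vector'' is attained on this finite set. I do not anticipate any real obstacle; the main task is to present the loop-by-loop cost accounting cleanly enough to justify the $O(m^2 n)$ figure rather than a looser bound.
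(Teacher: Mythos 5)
Your proposal is correct and follows the same direct-verification route the paper takes (the paper only sketches this, noting in particular that each contribution is $1/(yn)$ for some $y\in\{1,\dots,m\}$ to justify rationality). Your loop-by-loop accounting of the $O(m^2 n)$ bound and the invariance arguments for anonymity and neutrality are exactly the bookkeeping the paper leaves implicit.
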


		\bigskip


		%
		%
		%
		%
		%
		%
		%
		%
		%
		%
		%
		%
		%
		%
		%
		%

		Next we note that if preferences are strict, then RMEC is equivalent to random dictatorship. 
As a corollary, RMEC satisfies both \sd-efficiency and very strong \sd-participation under strict preferences. More interestingly, RMEC satisfies very strong \sd-participation even for weak orders. 
%

	
	\begin{proposition}
			RMEC  satisfies very strong \sd-participation.
		\end{proposition}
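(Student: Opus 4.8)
The plan is to compare the RMEC lottery $p=f(\pref)$ with $q=f(\pref_{-i})$ using the explicit ``every agent contributes $1/n$'' form of the rule. Fix $N\in\mathcal{F}(\mathbb{N})$ with $n=|N|\ge 2$ (the case $n=1$ is trivial), $\pref\in\mathcal{R}^N$ and $i\in N$. First I would use Algorithm~\ref{algo:ec} to write $p=\frac1n\sum_{j\in N}\hat p_j$ and $q=\frac1{n-1}\sum_{j\in N\setminus\{i\}}\hat q_j$, where $\hat p_j$ (resp.\ $\hat q_j$) is the uniform lottery over $F(j,A,\pref)$ (resp.\ over $F(j,A,\pref_{-i})$), and note that $F(i,A,\pref)\subseteq\max_{\succsim_i}(A)=E_i^1$. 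For $y\in A$ let $U_y=\{x\in A:x\succsim_i y\}=E_i^1\cup\cdots\cup E_i^{t}$ denote agent $i$'s upper contour set at $y$. Since $\hat p_i$ is supported on $E_i^1\subseteq U_y$, establishing $p\succsim_i^{\sd}q$ reduces to proving $p(U_y)\ge q(U_y)$ for every $y$, and the strict requirement will come out of the case $y\in E_i^1$.

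The key step, which I expect to be the only real obstacle, is the monotonicity claim: for every $y$ and every $j\in N\setminus\{i\}$, $\hat p_j(U_y)\ge\hat q_j(U_y)$ --- informally, adding $i$'s ballot can only shift each other agent's $1/n$ weight \emph{towards} alternatives that $i$ weakly prefers to $y$. To prove it I would first record the elementary rank-vector fact that, in the lexicographic rank-vector order, $r\succeq_{\mathrm{lex}}r'$ together with $k<k'$ implies $r+e_k\succ_{\mathrm{lex}}r'+e_{k'}$ (a short case split on the first coordinate in which $r$ and $r'$ differ). Now moving from $\pref_{-i}$ to $\pref$ adds the unit vector $e_k$ to $r(a)$ precisely when $a$ lies in agent $i$'s $k$-th class, and this index satisfies $k\le t$ for $a\in U_y$ and $k>t$ for $a\notin U_y$. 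Hence, if $F(j,A,\pref_{-i})$ meets $U_y$, pick $a$ in the intersection: for any $b\in\max_{\succsim_j}(A)\setminus U_y$ we have $r(a)\succeq_{\mathrm{lex}}r(b)$ in $\pref_{-i}$ (as $a$ is rank-maximal in $\max_{\succsim_j}(A)$) and $k_a\le t<k_b$, so in $\pref$ the rank vector of $a$ strictly dominates that of $b$; therefore no such $b$ is rank-maximal in $\pref$, whence $F(j,A,\pref)\subseteq U_y$ and $\hat p_j(U_y)=1\ge\hat q_j(U_y)$. If instead $F(j,A,\pref_{-i})$ misses $U_y$, then $\hat q_j(U_y)=0$ and the claim is immediate.

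Given the claim, both parts of the proposition reduce to arithmetic. Write $S=\sum_{j\ne i}\hat q_j(U_y)=(n-1)\,q(U_y)\in[0,n-1]$; the decomposition of $p$ and the claim give
\[
  p(U_y)=\frac1n\Bigl(1+\sum_{j\ne i}\hat p_j(U_y)\Bigr)\ \ge\ \frac{1+S}{n}\ \ge\ \frac{S}{n-1}=q(U_y),
\]
the last inequality being equivalent to $n-1\ge S$. This yields $p\succsim_i^{\sd}q$ for all profiles. For the strict clause I would use that the $\sd$-maximal lotteries for $i$ are exactly those supported on $E_i^1$, so a lottery strictly $\sd$-dominating $q$ for $i$ exists iff $q(E_i^1)<1$, i.e.\ (taking $y\in E_i^1$, so $U_y=E_i^1$) iff $S<n-1$; the displayed chain then becomes $p(E_i^1)\ge\frac{1+S}{n}>\frac{S}{n-1}=q(E_i^1)$, a strict $\sd$-improvement at $y$, which together with $p\succsim_i^{\sd}q$ gives $p\succ_i^{\sd}q$.

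The decomposition of $p,q$ and the two-line estimate are routine; the hard part is the monotonicity claim, i.e.\ controlling how agent $i$'s arrival reshuffles the rank-maximal sets $F(j,A,\cdot)$ of the \emph{other} agents --- exactly what the rank-vector perturbation fact is for.
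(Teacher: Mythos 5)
Your proof is correct and follows essentially the same route as the paper's: your monotonicity claim that agent $i$'s arrival can only push each other agent's contribution set $F(j,A,\cdot)$ into $i$'s upper contour sets is exactly the paper's ``central claim,'' proved by the same rank-vector perturbation argument, and your estimate $\frac{1+S}{n}\ge\frac{S}{n-1}$ is the paper's accounting of the $1/(n-1)$ versus $1/n$ contributions compensated by $i$'s own $1/n$. Your treatment of the strict clause (reducing the existence of a strictly $\sd$-dominating lottery to $q(E_i^1)<1$ and folding it into the same inequality at $y\in E_i^1$) is a slightly tidier, more unified version of the paper's case split, but not a genuinely different argument.
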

			\begin{proof}
				Let us consider the RMEC outcome $p$ when $i$ abstains and compare it with the RMEC outcome $q$ when $i$ votes.

		When $i$ abstains, agent $j\in N\setminus \{i\}$ contributes probability weight $1/(n-1)$ uniformly to alternatives in $F(j,A,\pref_{-i})$.
		Now consider the situation when $i$ also votes. We want to identify the alternatives $j$ will contribute to. Our central claim is that 
		\emph{for each $a\in F(j,A,\pref)$ and $b\in \max_{\pref_i}(F(j,A,\pref_{-i}))$, it is the case that $a\succsim_i b$.}
 To prove the claim, assume for contradiction that when $i$ votes, $j$ contributes to some alternative $b$ less preferred by $i$ to $a\in \max_{\pref_i}(F(j,A,\pref_{-i})$. But this is not possible because $b$ had at most the same rank as $a$ when $i$ did not vote but since $a\succ_i b$, $a$ will have strictly more rank than $b$ when $i$ votes. Hence when $i$ votes, agent $j$ sends all his probability weight to either alternatives in  $\max_{\pref_i}(F(j,A,\pref_{-i}))$ or alternatives even more preferred by $i$. Thus we have proved the claim. By proving the claim, we have shown that when $i$ participates, any change in the relative contribution of some agent $j\neq i$ is in favour of agent $i$. 



Take any $b\in A$ and consider $\{a\midd a\pref_i b\}$. Assume $j$ is any agent in $N\setminus \{i\}$. If $j$ contributes anything (at most $1/(n-1)$) to $\{a\midd a\pref_i b\}$ when agent $i$ abstains, then when $i$ votes, $j$ will contribute $1/n$ to $\{a\midd a\pref_i b\}$ because of the central claim proved above. Now, for the two scenarios where $i$ votes or abstains, the contribution difference from $j$ to $\{a\midd a\pref_i b\}$ is at most $1/n(n-1)$, and the total contribution difference from $N\setminus \{i\}$ to $\{a\midd a\pref_i b\}$ is at most $1/n$, which would be compensated by the contribution of $i$ to $\{a\midd a\pref_i b\}$ when $i$ votes.
Therefore for each $b\in A$, $q(\{a\midd a\pref_i b\})\geq p(\{a\midd a\pref_i b\}).$ Thus $q\pref_i^{\sd} p$ so RMEC satisfies strong \sd-participation.



		
		We now show that RMEC satisfies \emph{very} strong \sd-participation.  
Suppose that $p=RMEC(N,A,\pref_{-i})$ is such that $p(\max_{\pref_i}(A))<1$. 
 It is sufficient to show that for $q=RMEC(N,A,\pref)$, $q(\max_{\pref_i}(A))>p(\max_{\pref_i}(A))$. If some other agent $j$'s relative contribution changes in favour of agent $i$, we are already done. So let us assume that each $j\neq i$, $F(j,A,\pref_{-i})=F(j,A,\pref)$. When $i$ votes, the total contribution to $\max_{\pref_i}(A)$ by agents other than $i$  is $p(\max_{\pref_i}(A))\frac{n-1}{n}$. The contribution of agent $i$ to $\max_{\pref_i}(A)$ is $\frac{1}{n}$. Hence 
\begin{align*}
&q(\max_{\pref_i}(A))=\frac{n-1}{n}p(\max_{\pref_i}(A))+\frac{1}{n}(1)\\
=&~\frac{n-1}{n}p(\max_{\pref_i}(A))+ \frac{1}{n}(p(\max_{\pref_i}(A))+1-p(\max_{\pref_i}(A)))\\
=&~p(\max_{\pref_i}(A))+\frac{1}{n}(1-p(\max_{\pref_i}(A)))>p(\max_{\pref_i}(A))
\end{align*}
The last inequality holds because we supposed that $p(\max_{\pref_i}(A))<1$ so that $1-p(\max_{\pref_i}(A))>0$.
Thus RMEC satisfies \emph{very} strong \sd-participation.  
\end{proof}
		
%
%
%

The fact that RMEC satisfies very strong \sd-participation is one the central results of the paper. We note here that very strong \sd-participation can be a tricky property to satisfy. For example the following simple variants of RMEC violate even  strong \sd-participation: (1) each agent contributes to a most  preferred Pareto optimal alternative or (2) each agent contributes uniformly to Pareto optimal alternatives most preferred by her. 

\begin{example}
	Consider the following profile.
	\begin{align*}
	1:&~~ a, e, d, f, b, c&
	2:&~~\{b, c, d, f\}, a, e\\
	3:&~~ e, a, \{b,c,d, f\}&
	4:&~~ e, c, \{f, b\}, a, d\\
	5:&~~ e, \{f,b\}, c, a, d
	\end{align*}

	Suppose we break-ties lexicographically and use rule (1). When 1 does not vote then 2 selects $b$ and the lottery is $\nicefrac{1}{4}b+\nicefrac{3}{4}e$
When 1 votes, 2 selects $c$ because $b$ is now not Pareto optimal (f dominates b). 
The new lottery  $\nicefrac{1}{5}a+\nicefrac{1}{5}c+\nicefrac{3}{5}e$ is not even weakly \sd-better for 1 than $\nicefrac{1}{4}b+\nicefrac{3}{4}e$. The same example works if each agent randomizes uniformly over most preferred Pareto optimal alternatives. 
	\end{example}

Next, we prove that RMEC is also ex post efficient i.e., randomizes over Pareto optimal alternatives. 
\begin{proposition}\label{prop:exposteff}
	RMEC is ex post efficient. 
	\end{proposition}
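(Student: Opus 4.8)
The plan is to show that every alternative in the support of the RMEC outcome $p$ is Pareto optimal, which by definition gives ex post efficiency. Suppose for contradiction that some $a \in \text{supp}(p)$ is Pareto dominated by an alternative $b$, i.e., $b \pref_i a$ for all $i \in N$ and $b \spref_k a$ for at least one agent $k$. Since $a$ is in the support, some agent $i$ contributed weight to $a$, which means $a \in F(i,A,\pref)$; in particular $a \in \max_{\pref_i}(A)$, so $a$ is among $i$'s most preferred alternatives. Because $b \pref_i a$, we get $b \in \max_{\pref_i}(A)$ as well for every agent $i$ who contributed to $a$ — indeed $b$ is weakly preferred to a top alternative, hence itself a top alternative for each such $i$.

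The heart of the argument is then a comparison of the rank vectors $r(a)$ and $r(b)$. First I would argue that $r(b)$ is at least as good as $r(a)$ lexicographically: since every agent weakly prefers $b$ to $a$, for every rank position $j$ the number of agents ranking $b$ at position $j$ or better is at least the number ranking $a$ at position $j$ or better (this is the pointwise statement underlying $b \pref_i a$ for all $i$, summed over agents); and since agent $k$ strictly prefers $b$ to $a$, $b$ sits strictly higher than $a$ in $k$'s list, so the cumulative counts are strictly larger at $k$'s rank for $a$. Translating this cumulative-count dominance into the lexicographic ``better than'' order on the rank vectors $r(a), r(b)$ shows $r(b)$ is strictly better than $r(a)$, or at least no worse. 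Combined with the fact that $b \in \max_{\pref_i}(A)$ for the agent $i$ who contributed to $a$, this contradicts $a \in F(i,A,\pref)$: $F(i,A,\pref)$ is defined as the subset of $i$'s most preferred alternatives with the \emph{best} rank vector, and $b$ is one of $i$'s most preferred alternatives with a strictly better (or equally good) rank vector, so $a$ could not have the best rank vector among $\max_{\pref_i}(A)$ if $r(b)$ strictly beats it.

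The main obstacle, and the step needing the most care, is making the rank-vector comparison fully rigorous: one must carefully relate the pointwise stochastic-dominance-style inequalities (cumulative counts of $b$ versus $a$ at each rank) to the lexicographic order on the raw rank vectors $r(a) = (r_1(a),\ldots,r_m(a))$ and $r(b)$, and handle the possibility that $r(a) = r(b)$ as vectors — in which case I would need to use the strict preference of agent $k$ to derive a strict inequality somewhere, showing equality of the full vectors is impossible when a strict Pareto dominance holds. A subtlety is that indifference classes can have different sizes across agents, so the bookkeeping of ``how many agents rank $x$ at position $j$'' must be done at the level of agents' own indifference classes, exactly as $r_j(\cdot)$ is defined in the preliminaries. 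Once the rank-vector strict-dominance claim is established, the contradiction with $a \in F(i,A,\pref)$ is immediate, and the proof is complete.
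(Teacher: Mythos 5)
Your proposal is correct and follows essentially the same route as the paper: take $a$ in the support, note it lies in $F(i,A,\pref)$ for some contributing agent $i$, observe that a Pareto-dominating $b$ must also be among $i$'s most preferred alternatives, and contradict rank-maximality of $a$ within $\max_{\pref_i}(A)$. The only difference is that you spell out (correctly, via cumulative counts and the strict inequality contributed by the strictly-preferring agent, who cannot rank $a$ in his top class) the step the paper merely asserts, namely that Pareto dominance forces $r(b)$ to be lexicographically strictly better than $r(a)$.
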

\begin{proof}
	Each alternative $a$ in the support is an alternative that is the most preferred alternative of an agent $i$ with the best rank vector. Suppose the alternative $a$ is not Pareto optimal. Then there exists another alternative $b$ such that $b\succsim_j a$ for all $j\in N$ and $b\succ_j a$ for some $j\in N$. Note that since $a$ is the most preferred alternative of $i$, it follows that $b\sim_i a$. Since $b$ Pareto dominates $a$, $b$ is a most preferred alternative of $i$ with a better rank vector than $a$. But this contradicts the fact that $a$ is a most preferred alternative of $i$ with the best rank vector. 
	\end{proof}




	Although RMEC is ex post efficient, it unfortunately does not satisfy the stronger efficiency property of \sd-efficiency.
	
			
			\begin{example}
				Consider the following preference profile with dichotomous preferences. 
				\begin{align*}
					1,2,3,4:&~~d &
					5,6:&~\{d,c\} &
					7,8:&~~\{d,b\} \\
					9:&~~\{a,b\}&
					10:&~~\{a,c\}
					\end{align*}
					
					The RMEC outcome is $\frac{8}{10}d+\frac{1}{10}c+\frac{1}{10}b$
but is \sd-dominated by $\frac{9}{10}d+\frac{1}{10}a$. 					\end{example}
		
		In the example above, although each agent chooses those most preferred alternatives that are most beneficial to other agents, the agents do not coordinate to make these mutually beneficial decisions. This results in a lack of \sd-efficiency. Although RMEC is not \sd-efficient just like RSD, it has a distinct advantage over RSD in terms of \sd-efficiency.
		
		\begin{proposition}
			 For any profile, if the RSD outcome is \sd-efficient, then the RMEC outcome is also \sd-efficient.
			  Furthermore, there exist instances for which the RSD outcome is not \sd-efficient but the RMEC is not only \sd-efficient but \sd-dominates the RSD outcome. 
			\end{proposition}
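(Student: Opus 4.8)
The plan is to treat the two assertions separately. For the first, I will prove two lemmas and combine them. \emph{Lemma~1 (support monotonicity of \sd-efficiency):} if a lottery $q$ is \sd-efficient and $\mathrm{supp}(p)\subseteq\mathrm{supp}(q)$, then $p$ is \sd-efficient. \emph{Lemma~2 (support containment):} for every profile, the support of the RMEC outcome is contained in the support of the RSD outcome. The first assertion is then immediate: if the RSD outcome is \sd-efficient, then by Lemma~2 the RMEC outcome has support inside that of the RSD outcome, and by Lemma~1 it is therefore \sd-efficient.

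Lemma~1 is a short perturbation argument, which I will prove in contrapositive form. Suppose $p$ is \sd-dominated by some lottery $p^{*}$, that is, $p^{*}\pref_i^{\sd}p$ for all $i$ and $p^{*}\spref_j^{\sd}p$ for some $j$, and suppose $\mathrm{supp}(p)\subseteq\mathrm{supp}(q)$. For small $\varepsilon>0$ put $q_\varepsilon=q+\varepsilon(p^{*}-p)$. Then $q_\varepsilon$ is again a lottery: it sums to $1$; on $\mathrm{supp}(q)$ its entries stay positive once $\varepsilon$ is small enough; and for $x\notin\mathrm{supp}(q)$ we have $p(x)=0$, so $q_\varepsilon(x)=\varepsilon\,p^{*}(x)\ge 0$. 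Moreover each cumulative-probability comparison of $q_\varepsilon$ against $q$ equals $\varepsilon$ times the corresponding (nonnegative) comparison of $p^{*}$ against $p$, so $q_\varepsilon\pref_i^{\sd}q$ for all $i$ and $q_\varepsilon\spref_j^{\sd}q$; hence $q$ is not \sd-efficient.

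Lemma~2 is the crux. Fix $a\in F(i,A,\pref)$; it suffices to exhibit a single permutation of the agents under which serial dictatorship never deletes $a$. Put $i$ first, so the working set becomes $T:=\max_{\pref_i}(A)$, and then process the remaining agents \emph{greedily}: at each step pick any not-yet-processed agent $j$ such that the current working set $S$ contains nothing strictly $\spref_j$-preferred to $a$ (processing such a $j$ replaces $S$ by $\{x\in S: x\indiff_j a\}$, which still contains $a$). I claim this never stalls. If it did, we would be left with a working set $S$ with $a\in S\subseteq T$ and a nonempty set $R$ of unprocessed agents such that: (i)~every already-processed agent, including $i$, is indifferent among all elements of $S$; and (ii)~every $j\in R$ has some $b_j\in S$ with $b_j\spref_j a$. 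Now consider $D:=\sum_{x\in S}\bigl(r(x)-r(a)\bigr)$. On the one hand, since $a\in F(i,A,\pref)$ and $S\subseteq T$, each summand $r(x)-r(a)$ is lexicographically at most $0$, so $D$ is as well. On the other hand, group the same sum by agent: the already-processed agents contribute the zero vector by~(i) (each assigns the same rank to every member of $S$), whereas by~(ii) each $j\in R$ contributes $\sum_{x\in S}\mathbf e_{\mathrm{rank}_j(x)}-|S|\,\mathbf e_{\mathrm{rank}_j(a)}$ (here $\mathrm{rank}_j(y)$ is the index $\ell$ with $y\in E_j^{\ell}$, and $\mathbf e_\ell$ is the $\ell$-th unit vector), a nonzero vector whose first nonzero coordinate is positive — it carries a $+1$ at the rank $j$ assigns to $b_j$, which is strictly earlier than the rank $j$ assigns to $a$, and its only negative entry lies at that later rank. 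A sum over a nonempty family of such vectors is lexicographically positive, contradicting $D$ being lexicographically at most $0$. Hence the greedy procedure exhausts $N\setminus\{i\}$, $a$ survives in the final working set, and therefore $a$ lies in the support of the RSD outcome. I expect this rank-vector bookkeeping to be the delicate step: one must verify rigorously that each per-agent discrepancy vector is lexicographically positive and that a sum of such vectors cannot coincide with the lexicographically nonpositive vector forced by the rank-maximality of $a$.

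For the second assertion it is enough to produce one witnessing profile. Strict preferences will not help here (under them RSD, RMEC, and random dictatorship coincide and are all \sd-efficient), so I would use a profile with dichotomous preferences, for instance a suitable enrichment of the ten-agent dichotomous example above. One then computes the RSD and RMEC outcomes explicitly, writes down a lottery that \sd-dominates the RSD outcome, and checks that the RMEC outcome is \sd-efficient and in fact \sd-dominates the RSD outcome. The only delicate point is to tune the multiplicities of the agent types so that the concentration of RMEC onto rank-maximal top alternatives genuinely \sd-dominates the more dispersed RSD lottery (rather than merely differing from it); this is settled by a short direct computation on the chosen profile.
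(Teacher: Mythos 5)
Your proof of the first assertion is correct and follows the same overall strategy as the paper: reduce to the support-containment claim $\text{supp}(RMEC(N,A,\pref))\subseteq \text{supp}(\textit{RSD}(N,A,\pref))$ and establish it by building, for each $a\in F(i,A,\pref)$, a permutation starting with $i$ under which serial dictatorship never eliminates $a$. You actually supply more detail than the paper at both ends: the paper invokes a known result (Aziz, Brandl, Brandt 2015) that \sd-efficiency depends only on the support, where you give a self-contained perturbation proof of the direction you need; and the paper asserts in one line that if every remaining agent strictly preferred something in the working set to $a$ then $a$ could not be rank maximal in $\max_{\pref_i}(A)$, where your lexicographic bookkeeping on $D=\sum_{x\in S}(r(x)-r(a))$ makes this precise (the key point being that a sum of lexicographically nonpositive vectors is lexicographically nonpositive, so a lexicographically positive total forces a single witness $x$ with $r(x)$ better than $r(a)$). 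Both steps check out.

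The gap is in the second assertion: an existence claim of this kind is only proved by exhibiting a concrete profile together with the computed RSD and RMEC lotteries, and you stop at a plan for finding one. Moreover, the particular starting point you suggest---enriching the ten-agent dichotomous profile used earlier---points in the wrong direction: on that profile RMEC itself is \sd-inefficient, and by your own first assertion any profile on which RMEC is \sd-inefficient has RSD \sd-inefficient too, so you would be tuning against the grain. The paper's witness is small and clean: the four-agent, four-alternative profile $1\colon \{a,c\},b,d$; $\ 2\colon \{a,d\},b,c$; $\ 3\colon \{b,c\},a,d$; $\ 4\colon \{b,d\},a,c$. Here the RSD lottery is $\nicefrac{1}{3}a+\nicefrac{1}{3}b+\nicefrac{1}{6}c+\nicefrac{1}{6}d$, which is \sd-dominated by $\nicefrac{1}{2}a+\nicefrac{1}{2}b$, while RMEC (all four alternatives are tied for top rank vector $(2,2,0,0)$ vs.\ $(2,0,2,0)$, so agents $1,2$ contribute to $a$ and agents $3,4$ to $b$) returns exactly $\nicefrac{1}{2}a+\nicefrac{1}{2}b$, which is \sd-efficient and \sd-dominates the RSD outcome. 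You should either reproduce such an example or carry out the "short direct computation" you defer; as written, the second half of the proposition is not established.
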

	           \begin{proof}

			Due to the result of \citet{ABB14b} that \sd-efficiency depends on the support, it is sufficient to show that $\text{supp}(\textit{RSD}(N,A,\pref))\supseteq \text{supp}(RMEC(N,A,\pref))$.

Now suppose that $a\in \text{supp}(RMEC(N,A,\pref))$.
 We also know that $a\in F(i,A,\pref)$ for some $i\in N$. We prove that $a\in \text{supp}(\textit{RSD}(N,A,\pref))$ by showing that there exists one permutation $\pi$ under which serial dictatorship gives positive probability to $a$. The first agent in the permutation $\pi$ is $i$.

We build the permutation $\pi$ so that $a$ is an outcome of serial dictatorship with respect to $\pi$. The working set $W$ is initialized to $A$. Agent $i$ refines $W$ to $\max_{\pref_{i}(A)}$. Now suppose for contradiction that each remaining agent strictly prefers some other alternative in $W$ to $a$. In that case, $a$ is not the rank maximal alternative from $\max_{\pref_{i}(A)}$ which is a contradiction to $a\in F(i,A,\pref)$. Thus for some agent $j$ not considered yet, $a$ is a most preferred alternative in $W$. 
We can add such an agent to the permutation and let him refine and update $W$. In $W$, $a$ still remains  rank maximal (with respect to agents who have not been added to the permutation) among alternatives in $W$.  
We can continue identifying a new agent who maximally prefers $a$ in the latest version of $W$ and appending the agent to the permutation $\pi$ until $\pi$ is fully specified. Note that $a$ still remains in the working set which implies that
 $a\in \text{supp}(\textit{RSD}(N,A,\pref))$. This completes the proof that if the RSD outcome is \sd-efficient, then the RMEC outcome is also \sd-efficient.
			 
			 \bigskip
			Next we prove the second statement. Consider the following preference profile. 
	           	\begin{align*}
	           	1&: \{a,c\}, b, d&
	           	2&: \{a,d\}, b, c\\
	           	3&:  \{b, c\}, a,d&
	           	4&:  \{b,d\}, a,c
	           	\end{align*}
	           	The unique RSD lottery is $p=\nicefrac{1}{3}\,a+\nicefrac{1}{3}\, b+\nicefrac{1}{6}\,c+\nicefrac{1}{6}d$,
	           	which is $\sd$-dominated by $\nicefrac{1}{2}\,a+\nicefrac{1}{2}\,b$. This was observed by \citet{ABBH12a}.

	          We now compute the RMEC outcome.
	          	The rank vectors are as follows:
	        $a: (2,2,0,0)$; $b: (2,2,0,0)$; $c: (2,0,2,0)$; and $d: (2,0,2,0)$.
The agents choose alternatives as follows:
	          	\begin{inparaenum}[]
	          		\item $1: a$,  
	          		\item $2: a$, 
	          		\item $3: b$,
	          		\item $4: b$
	          	\end{inparaenum}
	
		RMEC returns the following lottery which is \sd-efficient and \sd-dominates the RSD lottery:  $\nicefrac{1}{2}\,a+\nicefrac{1}{2}\,b.$
This completes the proof.	
	          	 \end{proof}
		 
		 \bigskip

		 Although RMEC is not \sd-efficient in general, we give experimental evidence that it returns \sd-efficient outcomes for most profiles. An exhaustive experiment shows that RMEC is \sd-efficient for every profile with 4 agents and 4 alternatives. This is already in constrast to RSD that can return \sd-inefficient outcome for $n=m=4$.  Further experiments show that RMEC is \sd-efficient for almost all the profiles with a larger size. In the experiment, we generated profiles uniformly at random for specified numbers of agents and alternatives and examined whether the corresponding RMEC lottery is \sd-efficient. The results are shown in Table 2. 
		 
  					\begin{table}[h!]
  						\centering
  					\label{tab:compare}
  			\begin{tabular}{ccccccccc}
				\toprule
  		                       \diagbox{$|A|$} {$|N|$} &4&5&6&7&8&&&\\
  					4&10,000&10,000&10,000&9,999&10,000\\
  					5&9,999&10,000&10,000&9,998&9,999\\
  					6&9,999&10,000&9,996&10,000&9,999\\
  					7&10,000&9,999&9,997&9,998&9,999\\
  					8&9,999&9,996&9,998&9,997&9,996\\
  					\bottomrule
  					\end{tabular}

  			\caption{The number of profiles for which the RMEC outcome is \sd-efficient out of 10,000 uniformly randomly generated profiles for specified numbers of agents and alternatives.}
  					\label{table:summary:random}
					
  					\end{table}
		 
		 Note that in general for any given preference profile with some ties, a significant proportion of lotteries are not \sd-efficient. On the other hand, RMEC almost always returns an \sd-efficient lottery. 

    \bigskip

			We say that a lottery satisfies \emph{fair welfare share} if each agent gets at least $1/n$ of the maximum possible expected utility he can get from any outcome. Fair welfare share was originally defined by \citet{BMS05a} for dichotomous preferences. We observe that since RMEC gives at least $1/n$ probability to each agent's first equivalence class, it follows that each RMEC outcome satisfies fair welfare share.
			Under dichotomous preferences, a compelling property is that of \emph{proportional share}~\citep{Dudd15a}. We define it more generally for weak orders as follows. A lottery $p$ satisfies proportional share if for any set $S\subseteq N$, $\sum_{a\in A \midd \exists i\in S \text{s.t. } a\in \max_{\pref_i}(A)}p(a)\geq |S|/n$.
We note that proportional share implies fair share.\footnote{ESR does not satisfy proportional share and the maximal lottery rule does not satisfy fair welfare share.}
			It is easy to establish that RMEC satisfies proportional share. 

	\begin{proposition}
		RMEC satisfies the proportional share property and hence the fair share property. 
		\end{proposition}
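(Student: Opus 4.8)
The plan is to exploit directly the structure of the RMEC outcome as a sum of $n$ per-agent contributions, each of which is entirely supported on that agent's top equivalence class. Recall from the definition of Algorithm~\ref{algo:ec} that the outcome lottery decomposes as $p=\sum_{i\in N}p_i$, where $p_i=\sum_{a\in F(i,A,\pref)}\frac{1}{n|F(i,A,\pref)|}a$, so $p_i$ is a nonnegative measure of total mass exactly $1/n$. The one fact I would pin down first is that $F(i,A,\pref)\subseteq\max_{\pref_i}(A)$, which is immediate since $F(i,A,\pref)$ is by construction a subset of $\max_{\pref_i}(A)$ (those top alternatives with the best rank vector). Hence $\mathrm{supp}(p_i)\subseteq\max_{\pref_i}(A)$ and $p_i(\max_{\pref_i}(A))=1/n$.

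Now fix an arbitrary $S\subseteq N$ and let $T=\{a\in A\midd \exists i\in S\text{ s.t. }a\in\max_{\pref_i}(A)\}$ be the target set in the definition of proportional share. For every $i\in S$ we have $\max_{\pref_i}(A)\subseteq T$, and therefore $p_i(T)\geq p_i(\max_{\pref_i}(A))=1/n$ (in fact equality, since $p_i$ has total mass $1/n$). Since each $p_j$ is a nonnegative measure, I would then simply bound
\[
p(T)=\sum_{j\in N}p_j(T)\geq\sum_{i\in S}p_i(T)=\frac{|S|}{n},
\]
discarding the contributions of agents outside $S$. This establishes proportional share.

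For the ``hence'' part, I would invoke the implication ``proportional share implies fair share'' already noted above: taking $S=\{i\}$ gives $p(\max_{\pref_i}(A))\geq 1/n$, and since agent $i$'s maximum attainable expected utility over all lotteries is that of an alternative in $\max_{\pref_i}(A)$ (equivalently, putting all probability on the top equivalence class maximizes expected utility for any consistent utility function), a lottery that places at least $1/n$ probability on that class delivers at least $1/n$ of that maximum to $i$. There is essentially no obstacle here: the only point requiring a moment's care is the observation $F(i,A,\pref)\subseteq\max_{\pref_i}(A)$, after which the argument is a one-line nonnegativity estimate; the fair-share consequence is then either a direct restatement or an appeal to the general implication.
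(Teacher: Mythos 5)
Your proof is correct and matches the argument the paper has in mind: the paper omits a formal proof (calling it ``easy to establish''), but its stated justification --- each agent contributes its full $1/n$ probability mass to a subset $F(i,A,\pref)$ of its top equivalence class, so any coalition $S$ collectively places at least $|S|/n$ on the union of its members' top classes --- is exactly your decomposition $p=\sum_i p_i$ with $\mathrm{supp}(p_i)\subseteq\max_{\pref_i}(A)$. The derivation of fair share from the $S=\{i\}$ case likewise agrees with the paper's remark that proportional share implies fair share.
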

	
		A different fairness requirement is that each agent finds the outcome at least as preferred with respect to \sd as the uniform lottery. A PSCF $f$ satisfies \sd-uniformity if for each profile $\pref$, $f(\pref)\pref_i^{\sd} \frac{1}{m}a_1+\cdots+\frac{1}{m}a_m$ for each $i\in N$. 
RMEC does not satisfy \sd-uniformity. 
However, we show that \sd-uniformity is incompatible with very strong \sd-participation. 
	
		\begin{proposition}
			There exists no PSCF that satisfies very strong \sd-participation and \sd-uniformity.
			\end{proposition}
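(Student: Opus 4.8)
The plan is to exhibit a single small profile on which the two axioms are in direct conflict. The driving observation is that \sd-uniformity has a lot of bite when there are only two alternatives and one of them is preferred by a strict majority: in that case it forces the outcome to be \emph{exactly} the uniform lottery, leaving very strong \sd-participation no room to produce the strict improvement it demands.

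Concretely, fix $A=\{a,b\}$ and consider the three-agent profile $\pref$ with $1\colon a,b$; $2\colon a,b$; $3\colon b,a$, together with the two-agent subprofile $\pref_{-2}$. First I would record what \sd-uniformity says when $|A|=2$: an agent with $a\spref b$ forces $f(a)\ge\nicefrac12$ and an agent with $b\spref a$ forces $f(b)\ge\nicefrac12$, since these are the only non-trivial instances of the defining inequalities (the instance at the bottom alternative reads $1\ge 1$). Applying this to $\pref$ — agents $1,2$ give $f(\pref)(a)\ge\nicefrac12$ and agent $3$ gives $f(\pref)(b)\ge\nicefrac12$ — and using $f(\pref)(a)+f(\pref)(b)=1$ pins down $f(\pref)=\nicefrac12\,a+\nicefrac12\,b$. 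The identical argument applied to $\pref_{-2}$ (agents $1$ and $3$) gives $f(\pref_{-2})=\nicefrac12\,a+\nicefrac12\,b$.

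Next I would invoke very strong \sd-participation for the electorate $N=\{1,2,3\}$ and $i=2$. Since agent $2$ strictly prefers $a$, the degenerate lottery $p$ with $p(a)=1$ satisfies $p\mathrel{\succ_2^{\sd}} f(\pref_{-2})$ — it weakly dominates at the upper set $\{a,b\}$ and strictly dominates at the upper set $\{a\}$ — so the antecedent of the ``whenever'' clause in the definition holds. Very strong \sd-participation then requires $f(\pref)\mathrel{\succ_2^{\sd}} f(\pref_{-2})$. But we have shown $f(\pref)=f(\pref_{-2})=\nicefrac12\,a+\nicefrac12\,b$, and $\succ_i^{\sd}$, being the strict part of $\succsim_i^{\sd}$, is irreflexive, so no lottery strictly \sd-dominates itself — a contradiction. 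Hence no PSCF satisfies both properties.

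There is essentially no technical obstacle here; the one point that needs care is that \sd-uniformity genuinely forces equality rather than merely $f\ge\nicefrac12$ on the majority's top alternative, which is exactly why I use a three-agent rather than a two-agent profile (with two agents, $f(\pref_{-i})$ would only be constrained to put weight at least $\nicefrac12$ on the remaining agent's favourite, and one would then need an extra case split on whether that outcome is already degenerate). I would close by remarking that the construction works for any $A$ with $|A|\ge 2$: take agents $1,2$ sharing one strict linear order on $A$ and agent $3$ with the reverse order; the two opposing chains of \sd-uniformity inequalities again force both $f(\pref)$ and $f(\pref_{-2})$ to equal the uniform lottery, and the rest of the argument is unchanged.
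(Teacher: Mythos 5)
Your proof is correct and follows essentially the same strategy as the paper's: exhibit a profile containing agents with opposing preferences so that \sd-uniformity pins both $f(\pref)$ and $f(\pref_{-i})$ to the exact uniform lottery, while very strong \sd-participation demands a strict improvement for the joining agent, which is impossible. The paper uses three alternatives and a fully reversed pair of strict orders, whereas you use two alternatives, but the mechanism is identical.
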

			\begin{proof}
			
				Consider the following preference profile. 
	\begin{align*}
				1:&\quad a, b, c&
				2:&\quad c, b, a&
				3:&\quad a, b, c
	\end{align*}

				When 1 and 2 vote, SD-uniformity demands, that the outcome is uniform.
				When 1, 2, 3 vote, SD-uniformity still demands that the
				outcome is uniform. However very strong-SD-participation demands that
				3 should get strictly better outcome with respect to \sd.
				\end{proof}

	
	Whereas RMEC satisfies the strongest notion of participation, it is vulnerable to strategic misreports.


%
%

\begin{example}
			Consider the following preference profile.
					\begin{align*}
								1:&\quad a, b, c, d, e&
								2:&\quad e,d,c,b,a&
								3:&\quad \{d,c\}, \{a,b,e\}
					\end{align*}

The outcome is $\nicefrac{1}{3}a+\nicefrac{1}{3}d+\nicefrac{1}{3}e$.
Now assume that agent $1$ reports, $\pref_1':\quad a, c, b, e.$
Then the outcome is $\nicefrac{1}{3}a+\nicefrac{1}{3}c+\nicefrac{1}{3}e$ which is better than $\nicefrac{1}{3}a+\nicefrac{1}{3}d+\nicefrac{1}{3}e$  with respect to \sd for agent $1$.
		\end{example}

	On the other hand, if $n \leq 2$, we can prove that RMEC satisfies \sd-strategyproofness. Also if preferences are strict or if they are dichotomous, RMEC is \sd-strategyproof. We note that RMEC is also \sd-strategyproof if preferences are dichotomous or if they are strict. 
		
		We also note that RMEC satisfies a natural monotonicity property: reinforcing an alternative in the agent's preferences can only increase its probability.

			\section{Discussion}
	
			In this paper, we continued the line of research concerning strategic aspects in probabilistic social choice (see e.g.,~\citep{Aziz13b,ABB13d, ABBH12a,BBH15b,Gibb77a,Proc10a}). 
We proposed the RMEC rule that satisfies very strong SD-participation and ex post efficiency as well as various other desirable properties. 
In view of its various properties, it is a useful PSCF with a couple of advantages over RSD. Unlike maximal lotteries~\citep{Bran17a} and ESR~\citep{AzSt14a}, RMEC is relatively simple and does not require linear programming to find the outcome lottery. The use of rank maximality also makes it easier to deal with weak orders in a principled manner.

\paragraph{A general approach}


Consider a scoring vector $s=(s_1,\ldots, s_m)$ such that $s_1>\cdots>s_m$.
An alternative in the $j$-th most preferred equivalence class of an agent is given score $s_j$. An alternative with the highest score is the one that receives the maximum total score from the agents (see for e.g., \citep{FiGe76a} for discussion on positional scoring vectors). Note that an alternative is rank maximal if it achieves the maximum total score for a suitable scoring vector $(n^m,n^{m-1},\ldots, 1)$. We also note that RMEC is defined in a way so that each agent gives $1/n$ probability to his most preferred alternatives that have the best rank vector. The same approach can also be used to select the most preferred alternatives that have the best Borda score or score with respect to any decreasing positional scoring vector. We refer to $s$-MEC as the maximal equal contribution rule with respect to scoring vector $s$. In the rule, each agent identifies $F(i,A,\pref)$ the subset of alternatives in $\max_{\pref_i}(A)$ with the best total score and uniformly distributes $1/n$ among alternatives in $F(i,A,\pref)$.  
The argument for very strong \sd-participation and ex post efficiency still works for any $s$-MEC rule. Any $s$-MEC rule is also anonymous, neutral, single-valued, and proportional share fair.

	\bigskip
	
	It will be interesting to see how RMEC fares on more structured preferences~\citep{AnPo16a}.
	Random assignment rules~\citep{BoMo01a, Kase06a} can be seen as applying a PSCF to a voting problem with more structured preferences~(see e.g., \citep{AzSt14a}). 
It will be interesting to see how RMEC will fare as a random assignment rule especially in terms of \sd-efficiency.


		
  \bibliographystyle{abbrv} 


%

\end{document}